\newcommand{\argmax}{\operatornamewithlimits{arg\,max}}
\newtheorem{proposition}{Proposition}
\newenvironment{proof}{\noindent\textit{Proof~~}}
{\nolinebreak[4]\hfill$\square$\\\par}
\title{On subgame perfect equilibria in quantum Stackelberg duopoly with incompete information}
\author{Piotr Fr\c{a}ckiewicz\\
\small Institute of Mathematics, Pomeranian University\\ \small 76-200 S\l upsk, Poland\\ \small fracor6@icloud.com}
\begin{document}
\maketitle
\begin{abstract}
The Li-Du-Massar quantum duopoly model is one of the generally accepted quantum game schemes. It has applications in a wide range of duopoly problems. Our purpose is to study Stackelberg's duopoly with incomplete information in the quantum domain. The result of Lo and Kiang has shown that the correlation of players' quantities caused by the quantum entanglement enhances the first-mover advantage in the game. Our work demonstrates that there is no first-mover advantage if the players' actions are maximally correlated. Furthermore, we proved that the second mover gains a higher equilibrium payoff that the first one.  
\end{abstract}
%%%%%%%%%%%%%%%%%%%%%%%%%%%%%%%%%%%%%%%%%
\section{Introduction}
Game theory, launched in 1928 by John von Neumann in \cite{neumann} and developed in 1944 by John von Neumann and Oskar Morgenstern in a book \cite{neumannmor} is one of the youngest branches of mathematics. The aim of this theory is to model mathematically the behavior of rational participants who aim at maximizing their own gain and take into account all possible ways of behaving of remaining participants.

The field developed on the border of game theory and quantum information theory is quantum game theory. This is an interdisciplinary area of research within which considered games are assumed to be played with the use of objects that behave according to the laws of quantum mechanics.

The first attempt to describe the game in the quantum domain applied to a simple coin tossing game \cite{meyer} and $2\times 2$ bimatrix games \cite{eisert}, \cite{marinatto}. Shortly after that quantum game theory has found applications in various fields including decision sciences \cite{khrennikov}, \cite{busemeyer}, \cite{fracorimperfectrecall}, financial theory \cite{piotrowski1}, \cite{piotrowski2}, \cite{piotrowski3} or mathematical psychology \cite{busemeyer}. 
%%%%%%%%%%%%%%%%%%%%%%%%%%

A lot of attention has been focused  on the duopoly problems. One of the generally accepted quantum duopoly scheme is due to Li et al. \cite{lidumassar}. A rich literature applies the Li-Du-Massar scheme to the Cournot dupoly problems \cite{chen}, \cite{yli}, \cite{sekiguchi}, \cite{fracorremark}, the Stackelberg duopoly \cite{previousstackelberg}, \cite{wang}, \cite{ryu}, \cite{xia} and the Bertrand duopoly examples \cite{lobertrand}, \cite{qinbertrand}, \cite{fracorbertrand}

The existing results motivate further study rather than exhaust the subject. Our previous work \cite{fracorremark} shows that the quantum Cournot duopoly given by a piecewise function requires a best reply analysis to determine Nash equilibria of the game. This method found further application in the quantum Bertrand duopoly (with discontinuous payoff functions) \cite{fracorbertrand}. Another problem worth restudying is the Stackelberg duopoly. The Li-Du-Massar approach to this sequential type of duopoly was first investigated in \cite{previousstackelberg}. The paper provides a comprehensive analysis for the entanglement parameter $\gamma$ bounded by $\gamma_{0} = (1/2)\sinh^{-1}(1)$. We learned from \cite{previousstackelberg} that the first-mover advantage is enhanced when $\gamma >0$, and the difference between the equilibrium payoffs of player 1 and 2 grows monotonically with $\gamma$, having the maximum value at $\gamma = \gamma_{0}$. Surprisingly, we showed that the first-mover advantage is suppressed as the entanglement parameter $\gamma$ goes to infinity, while the second player's equilibrium payoff increases, and converges to the first player's equilibrium payoff \cite{fracorduopoly2}.

The aim of this paper is to examine the quantum Stackelberg duopoly with incomplete information. Our work is a reexamination of the problem first formulated in \cite{lokiang}. We show that the previous result only partially answers the question what optimal strategies and the corresponding payoff results are in both the classical and quantum game. The equilibrium outcome found in \cite{lokiang} turns out to be well defined only for initial values of $\gamma$. As a result, it is still an open question what the equilibrium result is, for example, in the most interesting case $\gamma \to \infty$.

We are interested in finding subgame perfect equilibria of the game and the related equilibrium outcomes. It needs a more sophisticated reasoning compared with \cite{lokiang}. Our method is based on applying best reply analysis depending on different choices of the players.
%%%%%%%%%%%%%%%%%%%%%%%%%%%%%%%%%%%%%%%%%%
\section{Brief review of quantum Stackelberg duopoly with incomplete information}
The problem of quantum approach to Stackelberg's duopoly with incomplete information was first investigated in \cite{lokiang}. The authors used the Li-Du-Massar scheme \cite{lidumassar} to study Stackelberg's duopoly with an additional assumption that a player who moves first (player A) is uncertain about the marginal cost of the other player (player B). To be specific, player A moves first and offers quantity $x_{A}$ of a homogeneous product. Player B observes the move and then  offers the quantity $x_{B}$ of the product. In a game of incomplete information players may be uninformed about certain characteristics of the game. In the case studied in \cite{lokiang}, player A has incomplete information about player B's marginal cost $c_{B}$. Player A only knows that $c_{B} = c_{BH}$ occurs with probability $\theta$ and $c_{B} = c_{BL}$ occurs with probability $1-\theta$. In contrast, player B knows her own marginal cost and that of player A. It was assumed in \cite{lokiang} that player A's payoff function is 
\begin{equation}\label{payoffA}
u_{A}(x_{A},x_{B}) = x_{A}(k - x_{A} - x_{B}),
\end{equation}
and depending on player B's marginal cost, her payoff function is
\begin{equation}\label{payoffB}
u_{BL}(x_{A}, x_{B}) = x_{B}(k_{L}-x_{A} -x_{B}) \quad \text{or} \quad u_{BH}(x_{A}, x_{B}) = x_{B}(k_{H}-x_{A} -x_{B}),
\end{equation}
where $k_{L} = a - c_{BL}$, $k_{H} = a-c_{BH}$, $k = \theta k_{H} + (1-\theta) k_{L}$, $k_{H} > k_{L} > 0$, and $a$ is the price a consumer is willing to pay for the product if there are no products on the market. It was shown that the Nash equilibrium outcome $(x^*_{A}, x^*_{BH}, x^*_{BL})$ is given by
\begin{equation}\label{classicsolutionfake}
x^*_{A} = \frac{1}{2}k, \quad x^*_{BH} = \frac{1}{2}\left(k_{H} - \frac{k}{2}\right), \quad x^*_{BL} = \frac{1}{2}\left(k_{L} - \frac{k}{2}\right)
\end{equation}
with the payoff outcomes
\begin{equation}\label{classicsolutionfakeo}
u_{A}(x^*_{A}, x^*_{BH}, x^*_{BL}) = \frac{1}{8}k^2, \quad u_{BH}(x^*_{A}, x^*_{BH}) = \frac{1}{4}\left(k_{H}-\frac{k}{2}\right)^2, \quad u_{BL}(x^*_{A}, x^*_{BL}) = \frac{1}{4}\left(k_{L}-\frac{k}{2}\right)^2.
\end{equation}
Solution (\ref{classicsolutionfake}) is valid on condition that 
\begin{equation}\label{classicassumption}
k_{L} \geq \frac{k}{2} \Leftrightarrow \theta \leqslant \frac{k_{L}}{k_{H} - k_{L}}.
\end{equation}
The quantum extension of (\ref{payoffA}) and (\ref{payoffB}) was obtained by replacing $x_{A}$ and $x_{B}$ with $x_{A}\cosh\gamma + x_{B}\sinh\gamma$ and $x_{B}\cosh\gamma + x_{A}\sinh\gamma$, respectively. The resulting Nash equilibrium outcome was found to be
\begin{align}\label{qsolution1}
x^*_{A} &= \frac{\cosh^2\gamma \exp(-\gamma)}{1+ \cosh\gamma \exp(-\gamma)}k,\\ x^*_{BL} &= \frac{1}{2}\exp(-\gamma)\left(k_{L} - \frac{\cosh\gamma \exp \gamma}{1+ \cosh\gamma \exp(-\gamma)}k \right)  \label{qsolution2},\\ 
x^*_{BH} & = \frac{1}{2}\exp(-\gamma)\left(k_{H} - \frac{\cosh\gamma \exp \gamma}{1+ \cosh\gamma \exp(-\gamma)}k \right),\label{qsolution3}
\end{align}
providing that
\begin{equation}\label{quantumrestriction}
k_{L} \geq \frac{1+ \exp(-2\gamma)}{3 + \exp(-2\gamma)}k.
\end{equation}
It was claimed in \cite{lokiang} that player A gains advantage over player B in the quantum game, and the first-mover advantage is being enhanced as the entanglement parameter $\gamma$ goes to infinity. 
\section{Comment on the existing result} 
Studying possible Nash equilibria is a challenging task in a duopoly example with incomplete information. Applying differential calculus in the case of Stackelberg's duopoly requires imposing conditions under which determined Nash equilibria are valid. The assumption (\ref{classicassumption}) already concerning the Nash equilibrium outcome in the classical game (\ref{payoffA})-(\ref{payoffB}) turns out to be highly restrictive. It increases player A's incomplete information about the marginal costs as it puts some restrictions on the probability distribution $(\theta, 1 - \theta)$. In particular, when the distance between $k_{H}$ and $k_{L}$ is sufficiently large, we see from (\ref{classicassumption}) that the probability $\theta$ of player B's marginal cost $c_{BH}$ is close to zero, and the game becomes approximately a game with complete information. 

Another issue concerns restrictions imposed on the parameters $k_{L}$, $k_{H}$ and $\theta$ in the quantum game. The constraint (\ref{quantumrestriction}) was supposed to guarantee non-negative values of  (\ref{qsolution2}) and consequently (\ref{qsolution3}). But (\ref{quantumrestriction}) should read 
\begin{equation}\label{correctcondition}
k_{L} - \frac{\cosh\gamma \exp \gamma}{1+ \cosh\gamma \exp(-\gamma)}k \geq 0 \Leftrightarrow k_{L}\geq \frac{1+\exp 2\gamma}{3+ \exp(-2\gamma)}k,
\end{equation}
which considerably affects the existing study. Since $k > k_{L}$, inequality (\ref{correctcondition}) makes sense if $(1+ \exp2\gamma)/(3+ \exp(-2\gamma)) < 1$. This implies that the entanglement parameter $\gamma$ has to satisfy $\gamma \leq (1/2)\ln (1 + \sqrt{2}) \approx 0.440687$.  The equilibrium outcome (\ref{qsolution1})-(\ref{qsolution3}) is not well defined for $ \gamma >  (1/2)\ln (1 + \sqrt{2})$. This is in agreement with the previous work \cite{previousstackelberg}, where it was shown that the equilibrium outcome obtained by applying differential calculus is restricted by the same condition for $\gamma$. As a result, all the findings of \cite{lokiang} for $ \gamma >  (1/2)\ln (1 + \sqrt{2})$ are not valid. It is still an open question, how the quantum entanglement affects the players' strategic positions in the quantum Stackelberg duopoly with incomplete information for larger values of $\gamma$, especially in the most interesting case when $\gamma \to \infty$.

\section{Classical Stackelberg duopoly with incomplete information}
Stackelberg's duopoly is derived from Cournot's duopoly. Both games have the same range of players' actions and the payoff functions. The only difference is that  two players move simultaneously in the Cournot model whereas they choose sequentially one after the other in the Stackelberg model. Regarding the method of finding equilibrium strategies, differential calculus is a sufficient tool to determine players' optimal choices in both types of duopoly problems. It is also useful to find Bayesian equilibrium in the Cournot duopoly with incomplete information (see, for example, \cite{barron}). However, as it was shown in \cite{lokiang}, that method fails to determine Bayesian equilibrium in the Stackelberg model with incomplete information for the full range of $\theta$. 

In what follows, we work out a best response analysis to find Bayesian equilibria in the classical game. 
\subsection{Formal definition of the game}
Let us first recall the formal description of Stackelberg's duopoly with incomplete information. It is assumed that the marginal cost of player B is either $c_{BH}$, or $c_{BL}$, and $c_{BH} \ne c_{BL}$. The cost $c_{A}$ of player A is commonly known. The cost for player B is considered as a random variable to player~A. Player A only knows that it is $c_{BH}$ with probability $\theta$ or $c_{BL}$ with probability $1-\theta$. Player B knows her own marginal cost. The payoff functions of the players are
\begin{align}\label{ppayoffA}
u_{A}(x_{A}, x_{BH}, x_{BL}) &= \theta u_{A}(x_{A}, x_{BH}) + (1-\theta) u_{A}(x_{A}, x_{BL}),\\ \label{ppayoffB}
u_{B}(x_{A}, x_{BH}, x_{BL}) &= \theta u_{BH}(x_{A}, x_{BH}) + (1-\theta)u_{BL}(x_{A}, x_{BL}),
\end{align}
where
\begin{equation}\label{realpayoffA}
u_{A}(x_{A}, x_{B\cdot}) = \begin{cases} 
x_{A}(a-c_{A} - x_{A} - x_{B\cdot}) &\text{if}~x_{A} + x_{B\cdot} < a, \\ 
-c_{A}x_{A} &\text{if}~x_{A} + x_{B\cdot} \geq a,
\end{cases}
\end{equation}
\begin{equation}\label{realpayoffB}
u_{B\cdot}(x_{A}, x_{B\cdot}) = \begin{cases}
x_{B\cdot}(a-c_{B\cdot} - x_{A} - x_{B\cdot}) &\text{if}~x_{A} + x_{B\cdot} < a, \\
-c_{B\cdot}x_{B\cdot} &\text{if}~x_{A} + x_{B\cdot} \geq a,
\end{cases}
\end{equation}
and $B\cdot$ is either $BH$ or $BL$.
The payoff functions (\ref{realpayoffA}) and  (\ref{realpayoffB}) are commonly used forms to define problems of duopoly (see, for example, \cite{barron} \cite{peters}). The expression $a - x_{A} - x_{B\cdot}$ is defined as the price of the product. The higher the total quantity $x_{A} + x_{B\cdot}$ is, the lower the price of the product. Piecewise-defined functions (\ref{realpayoffA}) and  (\ref{realpayoffB})  take into account the fact that the price cannot be a negative value.
\subsection{Subgame perfect equilibrium} \label{sectionsubgame}
Before we proceed  to find players' optimal strategies, we need to select a proper solution concept. In what follows, we justify applying subgame perfect equilibrium. Initially, the definition of the game suggests using the concept of Bayesian equilibrium. It follows from the fact that the Stackelberg duopoly with (\ref{ppayoffA}) and  (\ref{ppayoffB})  can be written in terms of games with incomplete information. In this terminology, player B has two types, $c_{BH}$ and $c_{BL}$. Bayesian equilibrium is, in fact, equivalent to a standard Nash equilibrium provided that each type of a player occurs with a positive probability \cite{harsai}. Since the Stackelberg duopoly has an extensive structure, Nash equilibrium may result in many strategy profiles, and some of them may exhibit non-optimal behavior off the equilibrium path.

A basic Nash equilibrium refinement that is aimed at excluding non-credible strategies is a subgame perfect equilibrium. It becomes particularly significant when a game has a large number of subgames as it imposes additional conditions beyond the conditions defining the Nash equilibrium. In contrast, when the only subgame is a game itself, a subgame perfect equilibrium is equivalent to a Nash equilibrium, and perfect Bayesian equilibrium may turn out to be a better choice. Perfect Bayesian equilibrium is a counterpart of subgame perfect equilibrium in games with incomplete information. Therefore, it is a suitable Nash equilibrium refinement in the duopoly problem given by (\ref{ppayoffA}) and  (\ref{ppayoffB}). Interestingly, the Stackelberg duopoly problem with incomplete information can be easily converted into a game with perfect information (with a chance move) as it was shown in Fig.~1.  Since player A has no information about an action chosen by the chance mover, it makes no difference to player A whether she chooses her action $x_{A}$ before or after the chance move $c_{L}$ or $c_{H}$. It follows that perfect Bayesian equilibrium does not refine perfect subgame equilibrium in the 
\begin{figure}[t]\label{fig1}
\centering\includegraphics[width=6.5in]{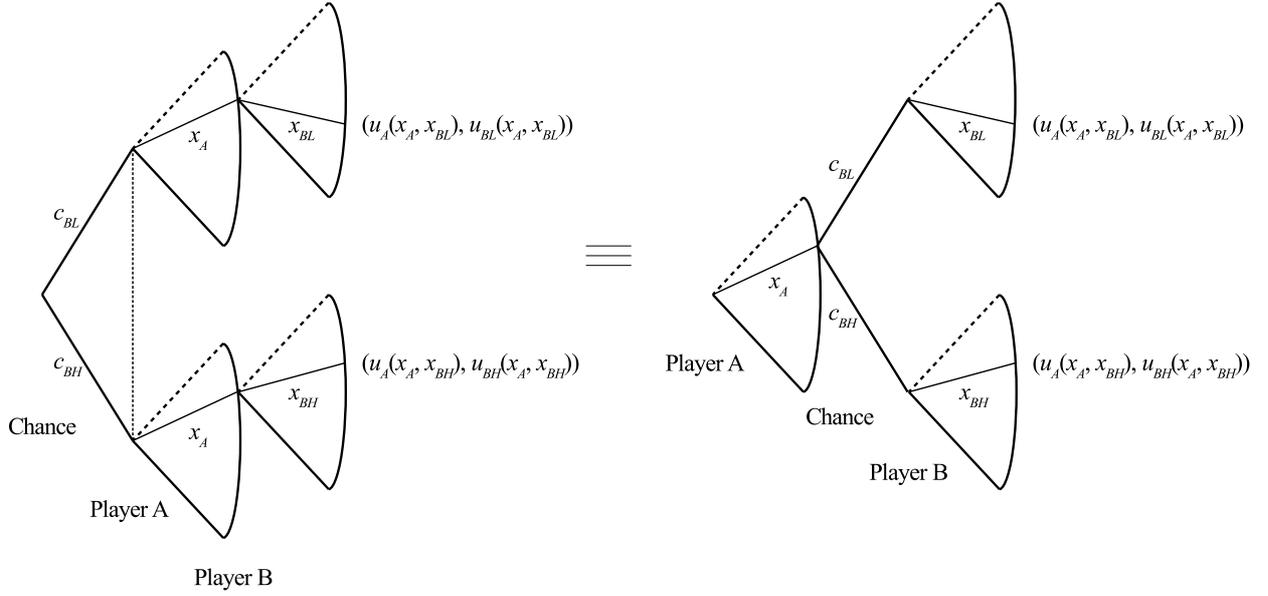}
\caption{Equivalent extensive forms of the Stackelberg duopoly with incomplete information. The extensive form on the right describes a game with perfect information as all information sets in the game consist of a single vertex. }
\end{figure}
Stackelberg duopoly problem with incomplete information.

Similarly to \cite{lokiang}, let $k_{H} = a-c_{H}$, $k_{L} = a-c_{L}$ and $k = \theta k_{H} + (1-\theta)k_{L}$ for $\theta \in (0,1)$ in (\ref{realpayoffA}) and  (\ref{realpayoffB}), and assume that $k_{H} > k_{L} > 0$. To find a subgame perfect equilibrium, we begin with subgames starting from player B's decision nodes. Consider the subgame determined by the chance action $c_{BL}$. Given $x_{A} \leq k_{L}$, player B's best reply is the solution of the equation $\partial u_{BL}(x_{A}, x_{BL})/\partial x_{BL} = 0$ for $x_{A} + x_{BL} < a$, which is $x^*_{BL} = (k_{L} - x_{A})/2$. If $x_{A} > k_{L}$, the derivative of $u_{BL}(x_{A}, x_{BL})$ with respect to $x_{BL}$ is negative. Hence $x^*_{BL} = 0$ in that case. We apply the same argument again with $x_{BL}$ replaced by $x_{BH}$, and obtain
\begin{equation}\label{bestreplies}
x^*_{BL} = \begin{cases} \frac{1}{2}(k_{L} - x_{A})  &\text{if}~ x_{A} \leq k_{L},\\
0 &\text{if}~x_{A} > k_{L}, \end{cases} \quad x^*_{BH} = \begin{cases} \frac{1}{2}(k_{H} - x_{A})  &\text{if}~ x_{A} \leq k_{H},\\
0 &\text{if}~x_{A} > k_{H}. \end{cases} 
\end{equation}
Given best reply functions (\ref{bestreplies}) of player B, the optimal choice of player A is obtained by maximizing the function $x_{A} \mapsto u_{A}(x_{A}, x^*_{BH}, x^*_{BL})$. By taking into account (\ref{bestreplies}),  the expression $u_{A}(x_{A}, x^*_{BH}, x^*_{BL})$ can be written as
\begin{align}\label{payoffAbest}
u_{A}(x_{A}, x^*_{BH}, x^*_{BL}) &= \begin{cases} u_{A}\left(x_{A}, \frac{1}{2}(k_{H} - x_{A}) ,  \frac{1}{2}(k_{L} - x_{A})\right)  &\text{if}~x_{A} \leq k_{L},\\
u_{A}\left(x_{A}, \frac{1}{2}(k_{H} - x_{A}) ,  0 \right)  &\text{if}~ k_{L} < x_{A} \leq k_{H}, \\ 
u_{A}\left(x_{A}, 0 ,  0 \right) &\text{if}~ k_{H} < x_{A} \leq a,
\end{cases} \nonumber \\
&=\begin{cases} \frac{1}{2}(k-x_{A})x_{A}  &\text{if}~x_{A} \leq k_{L}, \\
\frac{1}{2}x_{A}(2k -\theta k_{H}+(\theta - 2)x_{A}) &\text{if}~k_{L} < x_{A} \leq k_{H}, \\ 
x_{A}(k-x_{A}) &\text{if}~k_{H} < x_{A}\leq a.
\end{cases}
\end{align}
Having found $u_{A}(x_{A}, x^*_{BH}, x^*_{BL})$, we are now in a position to determine player A's best reply to $(x^*_{BH}, x^*_{BL})$. The result is presented by the following proposition:
%%%%%%%%%%%%%%Lemma%%%%%%%%%%%%%%%%%%%%%%%
\begin{proposition}
Player A's best reply to $(x^*_{BH}, x^*_{BL})$ given by (\ref{bestreplies}) is 
\begin{equation}\label{xabestreply}
x^*_{A} = \begin{cases} 
\frac{2k_{L}(1-\theta) + \theta k_{H}}{2(2-\theta)} &\text{if}~k_{L} < \frac{\theta k_{H}}{2}\\
k_{L} &\text{if}~\frac{\theta k_{H}}{2} \leq k_{L} \leq \frac{\theta k_{H}}{1+\theta} \\ 
\frac{1}{2}k &\text{if}~ k_{L} > \frac{\theta k_{H}}{1+\theta}, 
\end{cases}
\end{equation}
\end{proposition}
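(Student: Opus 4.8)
The plan is to treat $x_{A}\mapsto u_{A}(x_{A},x^*_{BH},x^*_{BL})$ exactly as it is written in (\ref{payoffAbest}): a function that is continuous on $[0,a]$ and that, on each of the three subintervals $[0,k_{L}]$, $(k_{L},k_{H}]$, $(k_{H},a]$, coincides with a concave quadratic (the leading coefficients being $-\tfrac12$, $\tfrac{\theta-2}{2}<0$ and $-1$). Consequently its maximum over $[0,a]$ is attained at the vertex of one of these parabolas, whenever that vertex lies in the corresponding subinterval, or else at one of the endpoints $0,k_{L},k_{H},a$. First I would record the three vertices: the first and third pieces are maximized at $x_{A}=k/2$, and, using the identity $2k-\theta k_{H}=\theta k_{H}+2(1-\theta)k_{L}$, the middle piece is maximized at $v:=\frac{2k_{L}(1-\theta)+\theta k_{H}}{2(2-\theta)}$, i.e.\ precisely the value occurring in the first case of (\ref{xabestreply}).

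Next I would discard the third interval. Since $k=\theta k_{H}+(1-\theta)k_{L}<k_{H}$, we have $k/2<k_{H}$, so the third piece is strictly decreasing on $(k_{H},a]$; its values there are bounded above by $k_{H}(k-k_{H})<0=u_{A}(0,x^*_{BH},x^*_{BL})$. Hence the maximum is attained on $[0,k_{H}]$, and it remains to compare the first two pieces, which match at $x_{A}=k_{L}$ (both equal $\tfrac12\theta k_{L}(k_{H}-k_{L})$), confirming continuity there.

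The core of the argument is then a handful of elementary equivalences locating $k/2$ and $v$ relative to $k_{L}$ and $k_{H}$: namely $k/2\le k_{L}\iff k_{L}\ge\frac{\theta k_{H}}{1+\theta}$, $v\le k_{L}\iff k_{L}\ge\frac{\theta k_{H}}{2}$, and $v<k_{H}$ always (this last one from $k_{L}<k_{H}$ and $0<\theta<1$); one also notes $\frac{\theta k_{H}}{2}\le\frac{\theta k_{H}}{1+\theta}$. Splitting on the three ranges of $k_{L}$ appearing in (\ref{xabestreply}) finishes the proof. (i) If $k_{L}<\frac{\theta k_{H}}{2}$, then $k/2>k_{L}$ so the first piece is increasing on $[0,k_{L}]$, while the second has its vertex $v\in(k_{L},k_{H})$ and increases from $k_{L}$ up to $v$; since $u_{A}$ at $v$ exceeds its value at $k_{L}$, the global maximizer is $x_{A}=v$. (ii) If $\frac{\theta k_{H}}{2}\le k_{L}\le\frac{\theta k_{H}}{1+\theta}$, the first piece is nondecreasing on $[0,k_{L}]$ and the second is decreasing on $(k_{L},k_{H}]$, so the maximizer is the junction point $x_{A}=k_{L}$. (iii) If $k_{L}>\frac{\theta k_{H}}{1+\theta}$, the first piece attains its interior maximum $k^{2}/8$ at $x_{A}=k/2<k_{L}$, while the second is still decreasing (as $k_{L}>\frac{\theta k_{H}}{2}$ too) and stays below $u_{A}(k_{L},\cdot)<k^{2}/8$; hence the maximizer is $x_{A}=k/2$.

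I expect the main obstacle to be bookkeeping rather than any deep step: one must keep the several thresholds straight, treat the tie cases where $k/2$ or $v$ coincides with $k_{L}$ (so that two formulas in (\ref{xabestreply}) agree on the overlap), and make sure the endpoint $x_{A}=a$ and the entire third piece are genuinely dominated. The only mildly delicate algebraic point is the identity $2k-\theta k_{H}=\theta k_{H}+2(1-\theta)k_{L}$, which is what collapses the vertex of the middle parabola to the stated closed form.
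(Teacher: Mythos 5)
Your proposal is correct and follows essentially the same route as the paper: discard the piece beyond $k_{H}$ as dominated, reduce to comparing the two concave quadratics on $[0,k_{L}]$ and $[k_{L},k_{H}]$ via the location of their vertices relative to the thresholds $\frac{\theta k_{H}}{1+\theta}$ and $\frac{\theta k_{H}}{2}$, and use the matching value at $x_{A}=k_{L}$ to decide the global maximizer case by case. The algebraic identities you flag (the vertex formula via $2k-\theta k_{H}=\theta k_{H}+2(1-\theta)k_{L}$ and the two threshold equivalences) all check out.
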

\begin{proof}
We first note that $x_{A} \geq k_{H}$ is not optimal. Player A can obtain a positive payoff by playing $x_{A} \leq k_{L}$ whereas $x_{A} \geq k_{H}$ always results in a negative payoff.
Let us find global maximum points of the first two sub-functions of (\ref{payoffAbest}). Write
\begin{equation}
g(x) = \frac{1}{2}(k-x)x, \quad h(x) = \frac{1}{2}x(2k - \theta k_{H} + (\theta - 2)x).
\end{equation}
An easy computation shows that 
\begin{equation}\label{argmax}
\argmax_{x\in [0, k_{L}]} g(x) = \begin{cases} \frac{1}{2}k &\text{if}~k_{L} > \frac{\theta k_{H}}{1+\theta} \\ 
k_{L} &\text{if}~k_{L} \leq \frac{\theta k_{H}}{1+\theta}
\end{cases}, \quad  \argmax_{x\in [k_{L}, k_{H}]} h(x) = \begin{cases}  
\frac{2k_{L}(1-\theta) + \theta k_{H}}{2(2-\theta)} &\text{if}~k_{L} < \frac{\theta k_{H}}{2} \\
k_{L} &\text{if}~k_{L} \geq \frac{\theta k_{H}}{2}.
\end{cases}
\end{equation}
and $g(k_{L}) = h(k_{L})$. Consider the case $k_{L} > \theta k_{H} / (1+\theta)$. Then $k_{L} > \theta k_{H} /2$, and it follows from (\ref{argmax}) that
\begin{equation}
\max_{x\in [0, k_{L}]}g(x) = g\left(\frac{1}{2}k\right) > g(k_{L}) = h(k_{L}) = \max_{x\in [k_{L}, k_{H}]} h(x).
\end{equation}
We thus proved that $x^*_{A} = k/2$ maximizes (\ref{payoffAbest}) if $k_{L} > \theta k_{H} / (1+\theta)$.

Let us now examine the interval $\theta k_{H}/2 \leq k_{L} \leq \theta k_{H}/(1+\theta)$. Again, by (\ref{argmax}) we conclude that
\begin{equation}
\max_{x \in [0, k_{L}]}g(x) = g(k_{L}) = h(k_{L}) = \max_{x\in [k_{L}, k_{H}]} h(x). 
\end{equation}
Therefore $x^*_{A} = k_{L}$ in that case. In the same manner we can see that $x^*_{A} =  (2k_{L}(1-\theta) + \theta k_{H})/(2(2-\theta))$ if $k_{L} < \theta k_{H}/2$.
\end{proof}
Given best reply (\ref{xabestreply}) to $x^*_{BH}$ and $x^*_{BL}$ we derive subgame perfect equilibrium outcome. 

If $k_{L} > \theta k_{H} / (1+\theta)$, which is equivalent to $k/2 < k_{L}$, then by (\ref{bestreplies}) and (\ref{xabestreply})
\begin{equation}
x^*_{A} = \frac{1}{2}k, \quad x^*_{BH} = \frac{1}{2}\left(k_{H}- \frac{1}{2}k\right), \quad x^*_{BL} = \frac{1}{2}\left(k_{L}- \frac{1}{2}k\right). 
\end{equation}
Similarly, if $\theta k_{H}/2 \leq k_{L} \leq \theta k_{H}/(1+\theta)$, then
\begin{equation}
x^*_{A} = k_{L}, \quad x^*_{BH} = \frac{1}{2}\left(k_{H}- k_{L}\right), \quad x^*_{BL} =0.
\end{equation}
For $k_{L} < \theta k_{H}/2$ we have
\begin{equation}
x^*_{A} = \frac{2(1-\theta)k_{L} 
+ \theta k_{H}}{2(2-\theta)} < \frac{\theta(1-\theta) k_{H} + \theta k_{H}}{2(2-\theta)}  = \frac{\theta k_{H}}{2} < k_{H}.
\end{equation}
Likewise, we can see that $x^*_{A} > k_{L}$. Therefore, 
\begin{equation}
x^*_{BH} = \frac{1}{2}\left(k_{H} -  \frac{2(1-\theta)k_{L} 
+ \theta k_{H}}{2(2-\theta)}\right), \quad x^*_{BL} = 0.
\end{equation}
To summarize,
\begin{equation}\label{equilibriumoutcome}
(x^*_{A}, x^*_{BH}, x^*_{BL}) = \begin{cases}
\left(\frac{1}{2}k,  \frac{1}{2}\left(k_{H} - \frac{1}{2}k\right), \frac{1}{2}\left(k_{L} - \frac{1}{2}k\right)\right) &\text{if}~k_{L} > \frac{\theta k_{H}}{1+\theta}, \\
\left(k_{L}, \frac{1}{2}(k_{H} - k_{L}), 0\right) &\text{if}~\frac{\theta k_{H}}{2} \leq k_{L} \leq \frac{\theta k_{H}}{1+\theta}, \\ 
\left(\frac{2(1-\theta)k_{L} 
+ \theta k_{H}}{2(2-\theta)}, \frac{1}{2}\left(k_{H} -  \frac{2(1-\theta)k_{L} 
+ \theta k_{H}}{2(2-\theta)}\right), 0\right) &\text{if}~k_{L} < \frac{\theta k_{H}}{2}.
\end{cases}
\end{equation}
%%%%%%%%%%%%%%%%%%%%%%%%%%%%%%%%%%%%%%%%%%%%%
Formula (\ref{equilibriumoutcome}) leads to the following equilibrium payoff outcomes
\begin{equation}
u_{A}(x^*_{A}, x^*_{BH}, x^*_{BL}) = \begin{cases} 
\frac{1}{8}k^2 &\text{if}~k_{L} > \frac{\theta k_{H}}{1+\theta},\\
\frac{1}{2}\theta(k_{H}-k_{L})k_{L} &\text{if}~\frac{\theta k_{H}}{2} \leq k_{L} \leq \frac{\theta k_{H}}{1+\theta},\\
\frac{(-2k_{L}(-1+\theta) + \theta k_{H})^2}{8(2-\theta)} &\text{if}~k_{L} < \frac{\theta k_{H}}{2}.
\end{cases}
\end{equation}
\begin{equation}
u_{BH}(x^*_{A}, x^*_{BH}) = \begin{cases} 
\frac{1}{16}(k - 2k_{H})^2 &\text{if}~ k_{L} > \frac{\theta k_{H}}{1+\theta}, \\
\frac{1}{4}(k_{H} - k_{L})^2 &\text{if}~\frac{\theta k_{H}}{2} \leq k_{L} \leq \frac{\theta k_{H}}{1+\theta}, \\
\frac{(k+k_{L} + 2k_{H}(-2+\theta) - \theta k_{L})^2}{16(-2+\theta)^2} &\text{if}~k_{L} < \frac{\theta k_{H}}{2}.
\end{cases}
\end{equation}
\begin{equation}
u_{BL}(x^*_{A}, x^*_{BL}) = \begin{cases}
\frac{1}{16}(k-2k_{L})^2 &\text{if}~k_{L} > \frac{\theta k_{H}}{1+\theta} \\
0 &\text{if}~k_{L} \leq \frac{\theta k_{H}}{1+\theta}. 
\end{cases}
\end{equation}
Hence,
\begin{align}\label{ceo}
u_{B}(x^*_{A}, x^*_{BH}, x^*_{BL}) &= \theta u_{BH}(x^*_{A}, x^*_{BH}) + (1-\theta)u_{BL}(x^*_{A}, x^*_{BL}) \nonumber\\ 
&= \begin{cases}  \frac{1}{16}\left((k-2k_{L})^2(1-\theta) + (k-2k_{H})^2\theta\right) &\text{if}~k_{L} > \frac{\theta k_{H}}{1+\theta},\\
\frac{1}{4}(k_{H} - k_{L})^2\theta &\text{if}~\frac{\theta k_{H}}{2} \leq k_{L} \leq \frac{\theta k_{H}}{1+\theta},\\
\frac{(k+k_{L} + 2k_{H}(-2+\theta) - \theta k_{L})^2\theta}{16(-2+\theta)^2} &\text{if}~k_{L} < \frac{\theta k_{H}}{2}.
\end{cases}
\end{align}
We thus obtained a general solution based on subgame perfect equilibrium concept, which is not restricted by any specific relations between $k_{L}$ and $k_{H}$. We see at once the subfunctions of (\ref{equilibriumoutcome})-(\ref{ceo}) defined for $k_{L} > k_{H}/(1+\theta)$ coincide with (\ref{classicsolutionfake}) and (\ref{classicsolutionfakeo}). Moreover, in contrast to what was stated in \cite{lokiang}, a solution is also found for $k_{L} < k/2$, or equivalently for $k_{L} < \theta k_{H}/(1+\theta)$.

It was shown in \cite{lokiang} that whether player A or player B has a better strategic position depends on values of $k_{H}, k_{L}$ and $\theta$. This is also true in the general case. For example, let $k_{H}, k_{L}$ and $\theta$ such that
\begin{equation}\label{roboczyprzedzial}
\frac{\theta k_{H}}{2} < k_{L} < \frac{\theta k_{H}}{1+\theta}.
\end{equation}
Then, the inequality $u_{A}(x^*_{A}, x^*_{BH}, x^*_{BL}) - u_{B}(x^*_{A}, x^*_{BH}, x^*_{BL})$ is equivalent to $k_{L} > k_{H}/3$. It follows that $\theta = 2/3$ implies the first-mover advantage for (\ref{roboczyprzedzial}), whereas for $\theta = 1/2$, it is the second player that obtains a higher equilibrium payoff. 

As we will see in Section~\ref{section6} the maximally correlated quantities of the players in the quantum game lead to a much more transparent equilibrium outcome. 
%%%%%%%%%%%%%%%%%%%%%%%%%%%%%%%%%%%%%
%%%%%%%%%%%%%%%%%%%%%%%%%%%%%%%%%%%%%
\section{Li-Du-Massar approach to Stackelberg's duopoly with incomplete information}
\subsection{Li-Du-Massar approach to duopoly problems}
For the convenience of the reader we repeat the relevant material from \cite{lidumassar},  \cite{fracorbertrand}, \cite{fracorduopoly1}  and \cite{fracorduopoly2} in order to make our exposition self-contained. 

Let $|00\rangle$ be the initial state and $J(\gamma) = e^{-\gamma(a^{\dag}_{A}a^{\dag}_{B} - a_{A}a_{B})}$ be a unitary operator, where $\gamma\geq 0$ and $a^{\dag}_{i}$ ($a_{i}$) represents the creation (annihilation) operator of electromagnetic field $i$. The player $i$'s strategies are unitary operators of the form 
\begin{equation}\label{listrategies}
D_{i}(x_{i}) = e^{x_{i}(a^{\dag}_{i} - a_{i})/ \sqrt{2}}, x_{i} \in [0,\infty), i=A,B. 
\end{equation}
The operator $J(\gamma)$ and the strategy profile $D_{A}(x_{A})\otimes D_{B}(x_{B})$ determine the final state $|\Psi_{\mathrm{f}}\rangle$,
\begin{equation}
|\Psi_{\mathrm{f}}\rangle = J^{\dag}(\gamma)(D_{A}(x_{A})\otimes D_{B}(x_{B}))J(\gamma)|00\rangle.
\end{equation}
The quantity $q_{i}$ is then obtained by performing the measurement $X_{i} = \left(a^{\dag}_{i} + a_{i}\right)/\sqrt{2}$ on the state $|\Psi_{\mathrm{f}}\rangle$. The result is
\begin{align}\label{licorrelation}\begin{split}
&q_{A} = \langle \Psi_{\mathrm{f}} | X_{A} |\Psi_{\mathrm{f}}\rangle = x_{A}\cosh{\gamma} + x_{B}\sinh{\gamma}, \\ &q_{B} = \langle \Psi_{\mathrm{f}} | X_{B} |\Psi_{\mathrm{f}}\rangle = x_{B}\cosh{\gamma} + x_{A}\sinh{\gamma}.
\end{split}
\end{align}
We obtain the quantum extension of the classical Stackelberg duopoly by substituting (\ref{licorrelation}) into (\ref{realpayoffA}) and (\ref{realpayoffB}), 
\begin{equation}
u_{A(B\cdot)}(x_{A}, x_{B\cdot}, \gamma) = \begin{cases}q_{A(B\cdot)}(a-c_{A(B\cdot)}-e^{\gamma}(x_{A} + x_{B\cdot})) &\mbox{if}~e^{\gamma}(x_{A} + x_{B\cdot}) \leq a,\\
-c_{A(B\cdot)}q_{A(B\cdot)} &\mbox{if}~e^{\gamma}(x_{A} + x_{B\cdot}) > a.
  \end{cases}
  \end{equation}
It is worth pointing out that the resulting outputs (\ref{licorrelation}) are not in units of $x_{i}$'s. Given $x_{A}$ and $x_{B}$ fixed, we see at once that $q_{i}$ increases with $\gamma$, for $i=A,B$. We can normalize (\ref{licorrelation}) by setting
\begin{equation}\label{normalize}
x_{i} \mapsto D\left(\frac{x_{i}}{e^{\gamma}}\right).
\end{equation}
It follows easily from (\ref{normalize}) that the resulting quantities become
\begin{equation}\label{ncorrelation}
q'_{A} = \frac{x_{A}\cosh\gamma + x_{B}\sinh\gamma}{e^{\gamma}}, \quad q'_{B} = \frac{x_{B}\cosh\gamma + x_{A}\sinh\gamma}{e^{\gamma}},
\end{equation}
Both (\ref{licorrelation}) and (\ref{ncorrelation}) are equivalent when studying duopoly examples by the Li-Du-Massar scheme. For example, applying (\ref{ncorrelation}) in the Cournot duopoly \cite{fracorduopoly1} results in the unique Nash equilibrium $(x_{A}, x_{B})$ such that 
\begin{equation}\label{mojcournot}
x^*_{A} = x^*_{B} = \frac{a-c}{3 + \tanh\gamma}.
\end{equation}
In the case of applying  (\ref{licorrelation}) the Nash equilibrium strategy is of the form
\begin{equation}
x^*_{i} = \frac{(a-c)\cosh\gamma}{1+2e^{2\gamma}}, 
\end{equation}
which is simply the division of  (\ref{mojcournot}) by $e^{\gamma}$. It is worth pointing out that using (\ref{ncorrelation}) enables us to compare classical and quantum equilibria without referring to payoff outcomes. Quantity (\ref{mojcournot}) ranges from the classical equilibrium strategy $(a-c)/3$ to the part of the Pareto-optimal profile $(a-c)/4$. Another advantage of applying (\ref{ncorrelation}) is that we see how $x_{A}$ and $x_{B}$ are correlated when $\gamma \to \infty$, 
\begin{equation}\label{bordercase}
\lim_{\gamma \to \infty}q'_{i} = \frac{x_{A} + x_{B}}{2}.
\end{equation}
Equality (\ref{bordercase}) will be particularly useful in the next section.
%%%%%%%%%%%%%%%%%%%%%%%%%%%%%%%%%%%%
%%%%%%%%%%%%%%%%%%%%%%%%%%%%%%%%%%%%
%%%%%%%%%%%%%%%%%%%%%%%%%%%%%%%%%%%%
\subsection{Subgame perfect equilibria in maximally entangled game} \label{section6}
We proceed with the study of the Stackelberg duopoly with incomplete information in the quantum domain. The work \cite{lokiang}  provides us with the equilibrium outcome for $\gamma \leq \gamma_{0} = (1/2)\ln (1 + \sqrt{2})$. Our initial investigation showed that the results for $\gamma$ slightly higher than $\gamma_{0}$ are moderately interesting. The resulting equilibrium payoffs are very complex, and the players' strategic positions depend on the values of $k_{H}, k_{L}$ and $\theta$. For this reason, we focus on the case in which $\gamma \to \infty$.

The common method to find a Nash equilibrium (or its refinement) in the Li-Du-Massar approach to a duopoly problem with maximally correlated quantities of the players is determining the equilibrium profile for a general value of $\gamma$, and then taking the limit as $\gamma$ goes to infinity. In fact, we can simplify the analysis by substituting  (\ref{bordercase}) into (\ref{realpayoffA}) and (\ref{realpayoffB}). In this way we obtain players' payoff functions in the maximally correlated case.
\begin{equation}\label{qgame1}
u_{A}(x_{A}, x_{B\cdot}) = \begin{cases} 
\frac{x_{A} + x_{B\cdot}}{2}(k-x_{A} - x_{B\cdot}) &\text{if}~x_{A} + x_{B\cdot} \leq a \\
\frac{-c(x_{A} + x_{B})}{2} &\text{if}~x_{A} + x_{B\cdot} > a,
\end{cases}
\end{equation}
where $B\cdot$ is either $BH$ or $BL$, and
\begin{align} \label{ubh}
u_{BH}(x_{A}, x_{BH}) &= \begin{cases}  \frac{x_{A} + x_{BH}}{2}(k_{H} - x_{A} - x_{BH}) &\text{if}~x_{A} + x_{BH} \leq a \\ 
\frac{-c_{H}(x_{A} + x_{BH})}{2} &\text{if}~x_{A} + x_{BH}>a,
\end{cases} \\ 
u_{BL}(x_{A}, x_{BL}) &= \begin{cases}  \frac{x_{A} + x_{BL}}{2}(k_{L} - x_{A} - x_{BL}) &\text{if}~x_{A} + x_{BL} \leq a \\ 
\frac{-c_{L}(x_{A} + x_{BL})}{2} &\text{if}~x_{A} + x_{BL}>a,
\end{cases} \label{qgame3}
\end{align}
Now, we are left with the task of determining players' rational choices with respect to (\ref{qgame1})-(\ref{qgame3}). 
The method for finding subgame perfect equilibria is similar to that used in Subsection~\ref{sectionsubgame}. We first find player B's best reply to a strategy $x_{A}$ of player A. For $x_{A} + x_{BH} \leq a$ we have
\begin{equation}
\frac{\partial u_{BH}}{\partial x_{BH}} = \frac{1}{2}k_{H} - x_{A} - x_{BH},  \quad \frac{\partial^2 u_{BH}}{\partial x_{BH}^2} = -1.
\end{equation}
It follows that $x^*_{BH} = k_{H}/2 - x_{A}$ if $x_{A} \leq k_{H}/2$. For $x_{A} > k_{H}/2$, we have $\partial u_{BH}/ \partial x_{BH} < 0$. Therefore, $u_{BH}$ has the highest value at $x_{BH} = 0$. The same reasoning applies to $x^*_{BL}$. We thus get
\begin{equation}\label{brq1}
x^*_{BH} = \begin{cases}  
\frac{1}{2}k_{H} - x_{A} &\text{if}~x_{A} \leq \frac{1}{2} k_{H}, \\
0 &\text{if}~x_{A} > \frac{1}{2} k_{H},
\end{cases}  \quad x^*_{BL} = \begin{cases}  
\frac{1}{2}k_{L} - x_{A} &\text{if}~x_{A} \leq \frac{1}{2} k_{L}, \\
0 &\text{if}~x_{A} > \frac{1}{2} k_{L}.
\end{cases} 
\end{equation}
We are now in a position to determine player A's best reply $x^*_{A}$ to $(x^*_{BH}, x^*_{BL})$.
By substituting (\ref{brq1}) into (\ref{ppayoffA}) we can write $u_{A}(x_{A}, x^*_{BH}, x^*_{BL})$ as
\begin{align}
u_{A}(x_{A}, x^*_{BH}, x^*_{BL}) &= \begin{cases} 
u_{A}\left(x_{A}, \frac{1}{2}k_{H} - x_{A}, \frac{1}{2}k_{L} - x_{A}\right) &\text{if}~x_{A} \leq \frac{1}{2}k_{L}, \\
u_{A}(x_{A},  \frac{1}{2}k_{H} - x_{A}, 0) &\text{if}~\frac{1}{2}k_{L} < x_{A} \leq \frac{1}{2}k_{H}, \\
u_{A}(x_{A}, 0, 0) &\text{if}~\frac{1}{2}k_{H} < x_{A} \leq a.
\end{cases}
\\ &= \begin{cases}
\frac{k^2 - (2k^2-2kk_{H} + k^2_{H})\theta}{8(1 - \theta)} &\text{if}~x_{A} \leq \frac{1}{2}k_{L}, \\
\frac{1}{8}((2k-k_{H})\theta k_{H} + 4k(1-\theta)x_{A} - 4(1-\theta)x^2_{A})  &\text{if}~ \frac{1}{2}k_{L} < x_{A} \leq \frac{1}{2}k_{H},\\ 
\frac{x_{A}}{2}(k-x_{A}) &\text{if}~\frac{1}{2}k_{H} < x_{A} \leq a.
\end{cases}
\end{align}
It enables us to formulate 
\begin{proposition}
A subgame perfect equilibrium outcome in game defined by (\ref{qgame1})-(\ref{qgame3}) is 
\begin{equation}\label{mainprofile}
\left(x^*_{A}, x^*_{BH}, x^*_{BL}\right) = \left(\frac{k}{2}, \frac{1}{2}(k_{H} - k), 0\right).
\end{equation}
\end{proposition}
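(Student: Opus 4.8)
The plan is to find the global maximizer of the one–variable map $x_A\mapsto u_A(x_A,x^*_{BH},x^*_{BL})$ written just above the proposition, and then read off B's responses from (\ref{brq1}). The structural fact driving everything is that $k=\theta k_H+(1-\theta)k_L$ with $k_H>k_L>0$ forces $k_L<k<k_H$, so $\tfrac12 k$ lies strictly inside the middle interval $(\tfrac12 k_L,\tfrac12 k_H)$ on which the second branch of $u_A$ is active; this is what makes the maximizer land on the interior branch rather than on a kink.

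Then I would treat the three branches separately. On $[0,\tfrac12 k_L]$ the function is the constant $\tfrac18\big(\theta k_H(2k-k_H)+(1-\theta)k_L(2k-k_L)\big)$: here $x^*_{BH}=\tfrac12 k_H-x_A$ and $x^*_{BL}=\tfrac12 k_L-x_A$ make the sums $x_A+x^*_{B\cdot}$ equal to $\tfrac12 k_H$ and $\tfrac12 k_L$, both $\le a$, so the price-positive branch of (\ref{qgame1}) applies and the $x_A$–dependence cancels. On $(\tfrac12 k_L,\tfrac12 k_H]$ the branch $h(x_A)$ is a downward parabola whose vertex, obtained from $h'(x_A)=\tfrac{1-\theta}{2}(k-2x_A)=0$, sits at $x_A=\tfrac12 k$, interior to the interval by the first step; hence $h$ is strictly increasing on $[\tfrac12 k_L,\tfrac12 k]$ and strictly decreasing on $[\tfrac12 k,\tfrac12 k_H]$. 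On $(\tfrac12 k_H,a]$ the branch $g(x_A)=\tfrac{x_A}{2}(k-x_A)$ is a downward parabola with vertex at $\tfrac12 k<\tfrac12 k_H$, so it is strictly decreasing there, and $x_A>a$ gives player A a non-positive payoff.

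To glue the branches I would check continuity at the two junctions: a short computation shows the constant on $[0,\tfrac12 k_L]$ equals $h(\tfrac12 k_L)$, and $g(\tfrac12 k_H)=h(\tfrac12 k_H)=\tfrac{k_H}{8}(2k-k_H)$. Combined with the monotonicity above, $u_A(\tfrac12 k,\cdot)=h(\tfrac12 k)$ strictly exceeds the common value on $[0,\tfrac12 k_L]$ and strictly exceeds $h(\tfrac12 k_H)=\max_{x_A\in[\tfrac12 k_H,a]}g(x_A)$, so the unique global maximizer is $x^*_A=\tfrac12 k$. Substituting into (\ref{brq1}) and using $\tfrac12 k_L<\tfrac12 k\le\tfrac12 k_H$ gives $x^*_{BL}=0$ and $x^*_{BH}=\tfrac12 k_H-\tfrac12 k=\tfrac12(k_H-k)$, which is exactly (\ref{mainprofile}); together with the best-reply functions (\ref{brq1}) this profile is a subgame perfect equilibrium.

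I do not expect a genuine obstacle: the argument is a piecewise single-variable optimization. The only points that need care are the bookkeeping that the price-positive branches of (\ref{qgame1})–(\ref{qgame3}) are the relevant ones at every point used (the relevant sums stay $\le a$, the quantities stay nonnegative), and the two continuity checks at $x_A=\tfrac12 k_L$ and $x_A=\tfrac12 k_H$ that let the three branches be compared through a single chain of inequalities.
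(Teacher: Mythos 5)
Your proof is correct and follows essentially the same route as the paper: a piecewise one-variable optimization of $u_A(x_A,x^*_{BH},x^*_{BL})$, locating the vertex of the middle parabola at $x_A=\tfrac{1}{2}k\in(\tfrac{1}{2}k_L,\tfrac{1}{2}k_H)$ and then substituting into (\ref{brq1}). The only (cosmetic) difference is that you obtain the cross-branch comparisons via continuity at the junctions $\tfrac{1}{2}k_L$ and $\tfrac{1}{2}k_H$ together with strict monotonicity, whereas the paper states the two corresponding inequalities $g(k/2)>h(k_H/2)$ and $g(k/2)>$ (first-branch constant) directly; your version actually makes the second of these, which the paper asserts without computation, immediate.
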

\begin{proof}
Let 
\begin{equation}
g(x) = \frac{1}{8}((2k-k_{H})\theta k_{H} + 4k(1-\theta)x_{A} - 4(1-\theta)x^2_{A}), \quad h(x) = \frac{x}{2}(k - x).
\end{equation}
Then 
\begin{equation}
\frac{dg}{dx} = 0 \Leftrightarrow x = \frac{k}{2} \quad \text{and} \quad \frac{d^2g}{dx^2} <0.
\end{equation}
It follows that $\argmax_{k_{L}/2 < x \leq k_{H}/2} g(x) = k/2$. Furthermore, it is easy to check that $\argmax_{k_{H}/2 < x < a}h(x) = k_{H}/2$.
Since
\begin{equation}
g\left(\frac{k}{2}\right) = \frac{1}{8}\left(k^2 - (k-k_{H})^2\theta\right) > \frac{1}{8}k_{H}(2k -k_{H}) = h\left(\frac{k_{H}}{2}\right)
\end{equation}
and 
\begin{equation}
g\left(\frac{k}{2}\right) > \frac{k^2 - (2k^2-2kk_{H} + k^2_{H})\theta}{8(1 - \theta)}
\end{equation}
for every $\theta \in (0,1)$, we conclude that 
\begin{equation}
\argmax_{0 \leq x_{A} \leq a}u_{A}(x_{A}, x^*_{BH}, x^*_{BL}) = \frac{k}{2}.
\end{equation}
Now substituting $x^*_{A} = k/2$ into (\ref{brq1}) completes the proof. 
\end{proof}
Let us determine the payoff outcome corresponding to (\ref{mainprofile}). According to (\ref{ppayoffA}) and (\ref{qgame1}), we obtain
\begin{equation}\label{finalpayoffA}
u_{A}\left(x^*_{A}, x^*_{BH}, x^*_{BL}\right) = \frac{1}{8}\left(k^2 - (k-k_{H})^2\theta\right).
\end{equation}
From (\ref{ubh}) and (\ref{qgame3}) we see that
\begin{equation}
u_{BH}(x^*_{A}, x^*_{BH}) = \frac{1}{8}k^2_{H}, \quad u_{BL}(x^*_{A}, x^*_{BL}) = \frac{1}{8}\left(k^2_{L} - (k_{H} - k_{L})^2\theta^2\right).
\end{equation}
Now (\ref{ppayoffB}) becomes
\begin{equation}\label{finalpayoffB}
\theta u_{BH}(x^*_{A}, x^*_{BH}) + (1-\theta)u_{BL}(x^*_{A}, x^*_{BL}) = \frac{1}{8}\left(k^2 + (k-k_{H})^2\theta\right).
\end{equation}
From what has already been obtained, it may be concluded that player B gains advantage in the game with incomplete information. The difference between player B and player A's equilibrium payoffs is positive for every $\theta \in (0,1)$ (see, Fig.~\ref{figure2}), 
\begin{equation}
(u_{B}-u_{A})(x^*_{A}, x^*_{BH}, x^*_{BL}) = \frac{1}{4}\left(k_{H} - k_{L}\right)^2(1- \theta)^2\theta.
\end{equation}
%%%%%%%%%%%%%%%figure%%%%%%%%%%%%%%%%
\begin{figure}[t]
\centering\includegraphics[width=4in]{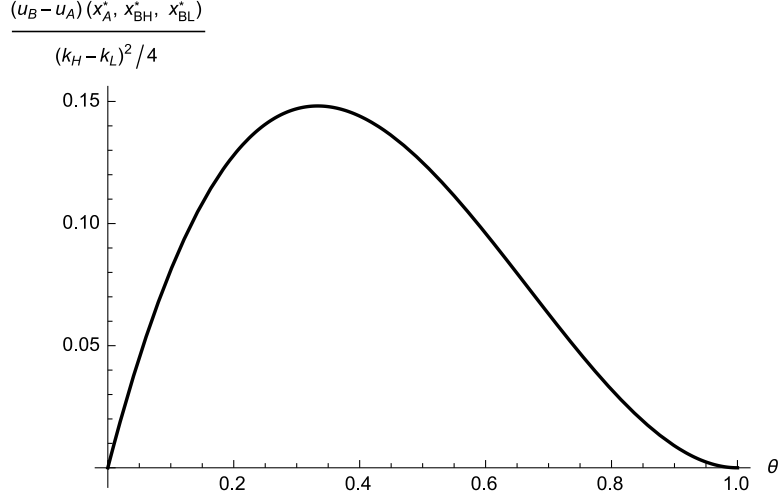}
%%% where xxxxxx name represents "figurename.eps"
\caption{The difference of players' equilibrium payoffs for $\theta \in [0,1]$.}
\label{figure2}
\end{figure}
The results (\ref{finalpayoffA}) and (\ref{finalpayoffB}) generalize those of \cite{fracorduopoly2}. If $\theta \in \{0,1\}$, player A has complete information about player B's marginal cost. In that case both (\ref{finalpayoffA}) and (\ref{finalpayoffB}) equal $k^2/8$, which is the equilibrium payoff in the quantum Stackelberg duopoly (with complete information) \cite{fracorduopoly2}. 
%%%%%%%%%%%%%%%%%%%%%%%%%%%%%%%%%%%
\section{Conclusions}
Li-Du-Massar scheme \cite{lidumassar} has made a significant contribution to quantum game theory. It has shed some new light on duopoly problems in the quantum domain. Although, it was originally designed for static duopoly examples, it has also found application in sequential types of duopoly.  The Stackelberg duopoly is a common example of sequential duopoly models. It is characterized by non-symmetric strategic positions of the players. The problem becomes even more complex if one of the players does not have complete information about some aspects of the game. We reexamined the previous study \cite{lokiang} in which the player who moves first has incomplete information about the marginal cost of the second player. The previous result was restricted to specific values of $k_{H}, k_{L}$ and $\theta$ in both the classical and quantum cases. Moreover, the existing study of equilibrium outcomes turned out to be valid for initial values of the entanglement parameter.  

Our work has provided the subgame perfect equilibrium analysis of the classical and the quantum game without any restrictions on the marginal costs. We showed that each player may gain a strategic advantage in the classical Stackelberg duopoly with incomplete information depending on player B's marginal costs and player A's level of certainty of those marginal costs. Interestingly, our study on the equilibrium outcomes in the quantum game shows a definitive second-mover advantage. We proved that it holds provided that the first player assigns a positive probability to both $c_{H}$ and $c_{L}$, and the players' quantities are maximally correlated. 

\section*{Acknowledgments}

This work was supported by the National Science Centre, Poland under the project 2016/23/D/ST1/01557.
%%%%%%%%%%%%%%%%%%%%%%%%

\end{document}